\newtheorem{theorem}{\textsc{Theorem}}[section]
\newtheorem{lemma}{\textsc{Lemma}}[section]
\title{Symmetric Encapsulated Multi-Methods\footnote{This work has been supported by a grant from Science Foundation Ireland.}  \footnote{This paper is a variant of the paper published in the OOPS track at SAC'2009 to provide full details of the type soundness proof presented in the original.}}
\author{David Lievens, Bill Harrison}
\date{}
\begin{document}

\maketitle

\begin{abstract}
In object systems, classes take the role of modules, and interfaces consist of methods. 
Because methods are encapsulated in objects, interfaces in object systems do not allow abstracting over \emph{where} methods are implemented. This implies that any change to the implementation structure may cause a rippling effect. Sometimes this unduly restricts the scope of software evolution, in particular for methods with multiple parameters where there is no clear owner. We propose a simple scheme where symmetric methods may be defined in the classes of any of their parameters. This allows client code to be oblivious of what class contains a method implementation, and therefore immune against it changing. When combined with multiple dynamic dispatch, this scheme allows for modular extensibility where a method defined in one class is overridden by a method defined in a class that is not its subtype. In this paper, we illustrate the scheme by extending a core calculus of class-based languages with these symmetric encapsulated multi-methods, and prove the result sound.
\end{abstract}

\newcommand{\sepbar}{\enskip | \enskip}
\newcommand{\eval}[2]{\ensuremath{#1 \rightarrow #2}}
\newcommand{\type}[3]{\ensuremath{#1 \vdash #2 : #3}}
\newcommand{\evalstar}[2]{\ensuremath{#1 \rightarrow^* #2}}


\section{Introduction}

Evolution of large software can only stay manageable when the rippling effect of change is to some degree constrained. One way of achieving this is to program in terms of modules that communicate through interfaces~\cite{parnasmodules}. A change to an interface may still cause a rippling effect, but implementation details that are abstracted over (that are hidden behind the interface) may change without such a global impact. 

In object technology, classes act as modules, and interfaces are made of methods. However, because methods are encapsulated in objects, interfaces in object systems do not allow abstraction over \emph{where} a particular method is implemented. This implies that every time a change is made to the implementation structure of an application, i.e.~anytime a method is moved, a rippling effect may occur. 

Some movements of methods, however, do not fundamentally alter the implementation structure, and it may be desirable to make such changes without incurring a rippling effect. In particular, we are thinking of scenarios where methods are moved from the class of its receiving parameter to the class of any of its other parameters. As any practitioner can vouch, the choice of which class to assign responsibility for a method (especially one with multiple object parameters) may sometimes lie in the eye of the beholder. Hence it may be subject to change, either because of evolving insight into the design trade-offs, or simply for pragmatic reasons. 

A common way of side-stepping rippling change when a method implementation needs to move from one class to another is to write proxy-code to forward the method invocation to its new destination. Presence of proxy code may act as evidence that the flexibility of moving methods is desirable. Such proxy code frequently occurs in recently emerging application fields such as Ubiquitous Computing and Service-Oriented Computing. In fact, it was our experience in designing a programming language for these fields~\cite{continuum} that prompted our interest in this kind of malleability. Hence, one may assume that malleability of this kind will gain in importance together with further development of these application fields.

As an example, consider a method that registers a transaction between a \emph{shop}, an \emph{item} and a \emph{customer}. This method may be defined in the class that models shops. A client that invokes this method would therefore write something along the lines
\begin{verbatim}
  shop.transaction(item,customer)
\end{verbatim}
If it were later to be decided that this is not the optimal place for the method implementation to reside, and it was moved to the class representing items, then this client would break. Adding a proxy-method to the \emph{shop} class, although common, is not entirely desirable. The main issue lies with having to manually add and maintain the code, but more subtly, proxy code also essentially takes the form of double dispatch invocation and therefore has slightly different dispatch semantics from the original.

To tackle this issue, we propose a scheme where methods are written in symmetric form and may be defined in the class of any of its parameters. With methods in symmetric form, we mean methods that list their receiving parameter explicitly, thereby treating all their parameters uniformly. Despite not giving any parameter special treatment, symmetric methods are encapsulated in the class in which they are defined. In other words, they have unrestricted access to the bowels of that class, and depend on the access model of the language for  access to the other parameters.

Under such a scheme, the above \emph{transaction} method may be defined in either the shop, item, or customer class. For example, it may be defined in the \emph{item} class as below, while not breaking a client that assumes the method being defined in the \emph{shop} class. 
\begin{verbatim}
  class Item {
    void transaction(Shop s, Item i, Customer c) {
      ... }
  }
\end{verbatim}

A main feature of object-orientation is its modular extensibility: method behaviour can be extended for derived classes (subtypes) without changing the original class. To achieve this, a subclass may define a method with the same\footnote{Or, depending on the flexibility, a \emph{conforming} signature.} signature that overloads the original method name with new behaviour. The discretely specified method bodies together make up the method and are called its branches. What method body is executed is decided at run-time based on the type of the receiving parameter\footnote{In addition, there may be a separate mechanism with which methods my be statically overloaded.}. 

We would like to generalize this ability to discretely specify method behaviour to our setting with encapsulated symmetric methods. For example, consider a \emph{charity shop} class that is a subclass of the \emph{shop} class. We would like to be able to define an extension of the \emph{transaction} method that deals specifically with charity shops in the \emph{charity shop} class, and have it invoked whenever the argument to the method is a charity shop. 
\begin{verbatim}
  class CharityShop extends Shop {
    void transaction(CharityShop cs, Item i, Customer c){
      ... }
  }
\end{verbatim}
To achieve this -while avoiding ad-hoc invocation semantics for different method call sites- we need to resort to multiple dynamic dispatch, also known as multi-methods. With multi-methods methods may be discretely specified, and branch selection is based on the dynamic type of all parameters. It is this characteristic that makes that branches may specialize any of the parameter types, and still be taken into account during selection. Note how the above method specializes a method that is not defined in its superclass, but in the unrelated \emph{item} class.

This brings us to a model with encapsulated symmetric multi-methods, in which the location of method implementations is partially abstracted over, and methods may be discretely specified along the hierarchies of the different argument classes. Clients, while still receiving the same level of feedback as before from static typing, can write code that is to a degree independent of where methods are implemented, and therefore may remain unchanged when the implementations are adapted or extended. The net effect of employing this scheme over standard object technology is that there is an increase in malleability of software, with the well-known software engineering benefits as a result. 

In the next sections, we incorporate the ideas above into a core object-oriented calculus (Featherweight Java). We start by discussing the merits of doing so, and then present the adapted formal semantics (Sect.~\ref{sec:syntax}-\ref{sec:typing}), including new notions of method lookup and selection (Sect.~\ref{sec:lookup}), and constraints on branches of overloaded methods (Sect.~\ref{sec:typing}). Soundness of the extension is tackled in Sect.~\ref{sec:soundness}. We discuss some observations about dispatching strategies in Sect.~\ref{sec:observations} Related work follows in Sect.~\ref{sec:related}, and finally we present our conclusions in Sect.~\ref{sec:conclusions}. The appendices contain an overview of the different formal definitions.

\section{Symmetric Featherweight Multi-Java}
\label{sec:sfmj}

We illustrate our ideas by extending Featherweight Java (FJ)~\cite{fjtoplas01} with symmetric encapsulated multi-methods, the result we call Symmetric Featherweight Multi-Java (SFMJ). In particular, $(i)$ we change the syntax for method definition and invocation to make all parameters explicit, $(ii)$ we allow methods to be defined in the class of any of its parameters, and $(iii)$ we introduce multiple dispatch.

We use FJ as a paradigmatic core calculus for mainstream class-based object-oriented languages. Our results are more generally applicable than to Java only. However, FJ proved to be useful vehicle for our explorations, because despite its simplicity it models method invocation in enough detail that we can adapt it for our purposes with little increase in complexity. In fact, the changes to the calculus are localized to method definition, invocation, and selection. As a result we may even establish soundness of SFMJ by a proof that is a limited extension to the one of FJ.

An in-depth discussion of FJ is beyond the merit of this article. We refer the reader to~\cite{fjtoplas01}, from which we also borrow the syntactic conventions without further comment. The structure of our exposition also mirrors the original one. An overview of the entire specification of SFMJ can be found in Fig.~\ref{fig:reduction} and Fig.~\ref{fig:typing}. To facilitate comparison with FJ, we also list the parts of the specification that have been altered in Fig.~\ref{fig:fj}.

\subsection{Syntax}
\label{sec:syntax}

Like FJ, an SFMJ program is a tuple $(CT,e)$ of a \emph{class table} and an expression. A class table is a mapping from class names to class declarations. The class table is assumed to be fixed, and implicitly accessible to all the semantic rules. Class declarations list the ancestor of the class (single inheritance), and in the body the fields, constructor and methods. There can only be a single constructor, and it takes a stylized form\footnote{See the FJ exposition for more details}. Methods may have any number of parameters (note the bar notation to indicate sequences), all with explicitly declared type, and a single return type. Method body consists of a single expression, which denotes the value returned on method invocation. The expressions consist of variables (which may be used to denote fields or arguments), field access, method invocation, object creation and casting. There is only one kind of values (expressions in normal form): objects, which are object creation expressions with their arguments fully evaluated. Note how objects carry their type, making it available at run-time. The syntax of SFMJ can be found in Fig.~\ref{fig:syntax} below. 

\begin{figure}[h]
\fbox{
\begin{minipage}{0.95\linewidth}

\begin{align*}
L \quad ::= \quad
  & \text{class }C\text{ extends }C\,\{ \bar{C} \bar{f}; K \bar{M} \} & \text{class declarations} \\
K \quad ::= \quad
  & C(\bar{C} \  \bar{f})\,\{\text{super}(\bar{f}); \text{this}.\bar{f}=\bar{f};\} & \text{constructor declarations} \\
M \quad ::= \quad
  & C \  m(\bar{C} \ \bar{x}) \{\text{return }e;\} & \text{method declarations} \\
e \quad ::= \quad
  & x     & \text{variables} \\
  & e.f   & \text{field access} \\
  & m(\bar{e}) & \text{\hl{method invocation}} \\
  & \text{new }C(\bar{e}) & \text{object creation} \\
  & (C) e & \text{cast} \\
v \quad ::= \quad
  & \text{new }C(\bar{v}) & \text{objects}
\end{align*}

\end{minipage}
}
\caption{Syntax of SFMJ}
\label{fig:syntax}
\end{figure}


Note that the sole difference in syntax between FJ and SFMJ lies in method invocation: the original asymmetric syntax $t.m(\bar{t})$ has been changed to a function-like $m(\bar{t})$ that treats all parameters uniformly. Syntax for method definition, and the fact that methods may now be defined in more places can already be accommodated by existing syntax. 


\subsection{Reduction}
\label{sec:reduction}


The operational semantics of SFMJ is specified as a rewriting semantics. Reduction reduces expressions to values by successive applications of reduction rules. There is just one kind of values (object), as primitive values are not modelled. The full specification of the reduction relation can be found in Fig.~\ref{fig:reduction}. Only reduction of method invocation expressions are of interest here. The uniform treatment of parameters yields a single congruence rule \textsc{RC-Invk} (compare with \textsc{RC-Invk-Recv} and \textsc{RC-Invk-Arg} in Fig.~\ref{fig:fj}). The computation rule for method invocation expressions states simply that for fully evaluated method arguments\footnote{Unlike the original, our rule imposes call-by-value evaluation order, although this is immaterial for this story.}, the expression reduces to the substitution of these arguments into the body of the method retrieved by method lookup. 

Method lookup is specified in terms  the composition of a \emph{lookup} function that locates potentially applicable methods, and a \emph{selection} function that chooses one of these. To support multiple dispatch, the lookup function is dependent on the (run-time) types of the actual parameters\footnote{Note how values ($\text{new }C(\bar{v})$) inherently carry their run-time type.}. Definitions of these functions can be found in Sect.~\ref{sec:lookup} below.

\infrule[R-Invk]
 {v_i = \text{new }C(\bar{u}) \\
  \textit{select}(\textit{lookup}(m,\bar{C})) = B_0 \ m(\bar{B} \ \bar{x})\,\{\text{return }e_0;\}
 } 
 {\eval{m(\bar{v})}{[\bar{v}/\bar{x}]e_0}}

Note how the rule retains the basic structure of the original. In fact, with appropriate definitions for \emph{lookup} and \emph{select}, we may achieve the original semantics. Such a definition can be found in Fig.~\ref{fig:fjlookup}.

\subsection{Method Lookup and Selection}
\label{sec:lookup}

\subsubsection*{Method Lookup}

The method lookup function yields the set of all methods that are applicable to the arguments provided. Method lookup in SFMJ must consider that definition of methods may reside in the classes of any of their arguments. Also, because of overloading, a method name alone does not identify applicable methods. The lookup strategy consists of two parts. The top-level predicate \textit{lookup} takes a method name and a sequence of classes and returns the sequence of methods returned from successive applications of a support predicate $\textit{lookup}_1$ to each of the classes in the sequence. The $\textit{lookup}_1$ recursively inspects a class and all its superclasses for methods that satisfy a given signature.


\infax[]
  {\textit{lookup}(m,\bar{C}) = \textit{lookup}_1(m,\bar{C},C_1), \dots, \textit{lookup}_1(m,\bar{C},C_n)}

\infax[]
  {\textit{lookup}_1(m,\bar{C},\text{Object}) = \bullet}

\infrule[]
  {CT(C) = \text{class }C\text{ extends }D\,\{\dots\} \\
   \bar{C} <: \bar{B} \andalso B_0 \ m(\bar{B} \bar{x})\,\{\text{return }t_0;\}  \in \bar{M}
  }
  {\textit{lookup}_1(m,\bar{C}, C) = B_0 \ m(\bar{B} \bar{x})\,\{\text{return }t_0;\}, \textit{lookup}_1(m,\bar{C},D)}

\infrule[]
  {CT(C) = \text{class }C\text{ extends }D\,\{\dots\} \\
   \bar{C} <: \bar{B} \andalso B_0 \ m(\bar{B} \bar{x})\,\{\text{return }t_0;\}  \not\in \bar{M}
  }
  {\textit{lookup}_1(m,\bar{C},C) = \textit{lookup}_1(m,\bar{C},D)}

\subsubsection*{Method Selection}

Method selection chooses from the set of applicable methods the \emph{most appropriate} one. This is the method with the signature that most closely matches the types of the arguments provided. In other words, it is the most specific method that is applicable to the given arguments. The predicate is defined as follows:

\infax[]
  {\textit{select}(\overline{ B_0 \ m(\bar{B} \ \bar{x} )  \,\{\text{return }e;\} }) = 
    B_{0,i} \ m(\bar{B}_i \ \bar{x}_i )  \,\{\text{return }e_i;\} 
    \\ \textit{s.t. } \bar{B}_i <: \bar{B}_j \ \forall j \in [1,n] }

The existence of a single most applicable method does not come for free, it must be asserted by adequate typing restriction, as we will discuss in Section~\ref{sec:typing}.

\subsection{Typing}
\label{sec:typing}

The bulk of the changes lie in the typing. This is unsurprising, as we allow methods to be defined in the class of any of its parameters, which induces changes to class and method typing. Additionally, the introduction of multiple dispatch requires the introduction of (global) constraints to avoid ambiguity between method branches. 

\medskip

\subsubsection*{Expression Typing}

Method invocation expressions are well-typed when the argument expressions are well-typed, and a method can be found whose whose formal argument types are supertypes of the actual argument types. The rule is very similar to the original FJ rule, except for the uniform treatment of parameters and the reliance on the previously defined lookup function. Note that the types passed to the lookup are the static (declared) types of the arguments. The run-time types of the actual arguments will be subtypes of these types. Hence, this check establishes that there will be at least one method applicable for this method invocation, although more methods may be taken into consideration during evaluation.

\infrule[T-Invk]
 {\type{\Gamma}{\bar{e}}{\bar{C}} \\
   B_0 \ m(\bar{B} \ \bar{x})\,\{\text{return }e;\} \in \textit{lookup}(m,\bar{C})  \\
  \bar{C} <: \bar{B}
 }
 {\type{\Gamma}{m(\bar{e})}{C}}


\medskip

%

\subsubsection{Method Typing}

A method definition is well-typed when $(i)$ the method body is well-typed assuming the types declared for the formal parameters, $(ii)$ the method body expression is assigned a type that is a subtype of the declared return type, and $(iii)$ the method is defined in the class of one of the parameters. It differs from the typing judgment in FJ, in its uniform treatment of the parameters (a pseudo-variable \texttt{this} need not adding to the typing context in which $t_0$ is typed), and in the absence of the invariance restriction on overloading that was imposed by the original. Because of support for multiple dispatch we have to move the equivalent of such a restriction to the program-level (see below).

\infrule[T-Meth]
 {\type{\bar{x}:\bar{C}}{e_0}{E_0} \andalso E_0 <: C_0 \\ 
  \exists i \in [1,n]. C \equiv C_i \\ 
 }
 {C_0 \  m(\bar{C} \ \bar{x}) \{\text{return }e_0;\} \ OK\text{ in }C}

%

%

\subsubsection{Program Typing}

In order to make a system with overloaded methods type safe and void of ambiguity problems, the set of branch definitions need to be subject to some restrictions. It must be noted that in our system, this set may not be obtained by looking only at the class in which a method is defined and the classes it references. This means that we loose separate compilation for SFMJ. To see this, consider the classes $A,B,A_1,B_1$ with $A_1 <: A$ and $B_1 <: B$. The following method definitions are legal:
\begin{verbatim}
  class A {
    void m(A a, B b) {...}
  }
  class B {...}
  class A1 extends A {
    void m(A1 a, B b) {...}
  }
  class B1 extends B {
    void m(A a, B1 b) {...}
  }
\end{verbatim}
The branch definitions in $A_1$ and $B_1$ give rise to an ambiguity when $m$ is invoked with instances of $A_1$ and $B_1$ as arguments. However, when inspecting the classes referenced by $A_1$, we do not encounter method branches defined in $B_1$ and vice versa. So, we need to look at all classes to make sure we do not miss any conflicting branch definitions. 

Note that the introduction of only multi-methods (and not symmetric methods) into FJ would not suffer from this, as all extensions of $m$ would be specified along one hierarchy, and therefore all (non-separate) branches would be visible from the leaf classes. The complication -but also the flexibility- of SFMJ derives from the fact that different hierarchies are involved, each of which may be independently extended.

We define the following support function to denote all the branches of an overloaded method, based on the observation that two method definitions are branches of the same overloaded method if there exist a method that they both specialize. Let $\mathcal{M}_{CT}$ be the set of all methods defined in a program $(CT,e)$, and $m$ be a member of that set. Then $\text{overloaded}(m)$ denotes the subset of $\mathcal{M}_{CT}$ that contains all the branches of the overloaded method of which $m$ is a member.
\begin{multline*}
\textit{overloaded}(C_0 \  m(\bar{C} \ \bar{x}) \{\text{return }e;\}) = \\ 
\{ B_0 \  m(\bar{B} \ \bar{x}) \{\text{return }e_1;\} \sepbar  
\exists D_0 \  m(\bar{D} \ \bar{x}) \{\text{return }e_2;\} \in M_{CT} \\
s.t. \bar{C} <: \bar{D} \text{ and }\bar{B} <: \bar{D}  \}
\end{multline*}

Castagna~\cite{castagna95calculus} has established the minimal constraints branches of an overloaded method must obey. Let $\mathcal O$ be a set of methods that are branches of an overloaded method. For every two methods $m_i$ and $m_j \in \mathcal O$, with respective argument types $\bar{I}, \bar{J}$, and return types $I_0, J_0$ it must hold that: 
\begin{enumerate}
	\item $\bar{K}$ maximal in $LB(\bar{I},\bar{J}) \Rightarrow$ a method $m$ with argument type $\bar{K}$ must be in $\mathcal O$
	\item $\bar{I} <: \bar{J} \Rightarrow I_0 <: J_0$ 
\end{enumerate}

The second constraint states that the value returned by a method that specializes another method should not violate the static return type of the method that is being specialized. The first constraint avoids ambiguity of method selection by guaranteeing that for any two branches there is always a branch that is most specific (be it possibly one of the two). 

We may elaborate a bit on the first constraint in our context to give an intuition on how this constraint makes that the lookup and selection functions defined in Section~\ref{sec:lookup} always yield a single method. Argument types are sequences of classes. The maximal lower bound of a sequence is the sequence of the maximal lower bounds, if they all exist. Because of single inheritance, a maximal lower bound of two classes may only exist if one class is a subclass of the other, and it is hence unique (if it exists). This means that $\bar{K}$ is unique, if it exists, and that it is a sequence of classes that are mentioned in $\bar{I}$ and $\bar{J}$. The lookup method defined above yields the subset of all branches that are applicable to the given arguments. The first constraint explicitly states that for any two methods that lookup returns, there exists a most specific one in the entire set of branches. Because this most specific branch has argument types that are mentioned in the argument types of the branches returned by lookup, and because methods are defined in the classes of their parameters, we know that lookup must also have returned this most specific branch. Hence \emph{lookup} and \emph{select} always yield a single method.

So, to make sure that the set of branches is subject to the above constraints, we introduce the following type judgment for programs. A set of branches \textit{wellformed} if they obey the above constraints.

\infrule[T-Prog]
 {\type{\emptyset}{e}{C} \\
  \forall m \in \mathcal{M}_{CT} . \textit{wellformed}(\textit{overloaded}(m))
 }
 {(CT,e) \ OK}


\subsection{Soundness}
\label{sec:soundness}

The soundness property that is established for FJ carries over entirely for SFMJ: if a term is well typed and it reduces to a normal form then it is either a value of a subtype of the original term's type, or an expression that gets stuck at a downcast. Formally:

\begin{theorem}[SFMJ Type Soundness]. If \type{\emptyset}{e}{C} and \evalstar{e}{e'} with $e'$ a normal form, then $e'$ is either a value $v$ with \type{\emptyset}{v}{D} and $D <: C$, or an expression containing $(D)\text{new }C(\bar{e})$ where $C <: D$.
\end{theorem}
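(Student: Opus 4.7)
The plan is to prove soundness by the standard Wright--Felleisen method, mimicking the FJ proof and inserting adaptations only where method invocation, lookup, and selection differ. Concretely, I would prove Preservation (subject reduction) and Progress, then conclude by induction on the length of the reduction sequence \evalstar{e}{e'}: Preservation keeps the derived type (up to subtyping) stable along each step, while Progress classifies normal forms as values or stuck downcasts.

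First I would establish the routine infrastructure carried over essentially unchanged from FJ: a Weakening Lemma, a Substitution Lemma (if \type{\Gamma,\bar{x}{:}\bar{D}}{e}{E} and \type{\Gamma}{\bar{v}}{\bar{D}'} with $\bar{D}' <: \bar{D}$, then \type{\Gamma}{[\bar{v}/\bar{x}]e}{E'} for some $E' <: E$), and the fields/class typing helpers. The induction for the Substitution Lemma is by cases on the typing derivation of $e$; every case goes through mechanically except T-Invk, where I need the new dispatch-stability lemma below.

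The key new lemma, which I will call \emph{Dispatch Stability}, states: if $B_0\ m(\bar{B}\ \bar{x})\{\text{return } e;\}\in\textit{lookup}(m,\bar{C})$ with $\bar{C}<:\bar{B}$, and $\bar{C}'<:\bar{C}$, then $\textit{select}(\textit{lookup}(m,\bar{C}'))$ is defined and returns some $B_0'\ m(\bar{B}'\ \bar{x}')\{\text{return } e';\}$ with $\bar{C}'<:\bar{B}'$ and $B_0'<:B_0$. Existence of the select uses Castagna's first constraint on \textit{overloaded}$(m)$: the applicable set is non-empty (the original branch remains applicable by transitivity of $<:$), and the maximal-lower-bound condition guarantees a unique most-specific branch, which by the discussion at the end of Section~\ref{sec:typing} is necessarily an element of the applicable set returned by \textit{lookup}. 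Covariance of the refined return type, $B_0'<:B_0$, follows from Castagna's second constraint applied to the two branches (the original and the selected), which sit in the same \textit{overloaded} class by well-formedness of the program. I would also record a Method Body Well-Typedness Lemma: any branch $B_0\ m(\bar{B}\ \bar{x})\{\text{return } e_0;\}$ appearing in the class table satisfies $\bar{x}{:}\bar{B}\vdash e_0:E_0$ with $E_0<:B_0$, read off directly from T-Meth.

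With these lemmas, Preservation is routine by induction on \eval{e}{e'}. The only genuinely new case is R-Invk: by inversion of T-Invk on $m(\bar{v})$, the arguments have types $\bar{C}$ with $\bar{C}<:\bar{B}$ for some branch found by lookup with return type $C$; by canonical forms the run-time types $\bar{C}'$ of the values $\bar{v}$ satisfy $\bar{C}'<:\bar{C}$; Dispatch Stability then gives the selected branch $(\bar{B}',B_0')$ with $\bar{C}'<:\bar{B}'$ and $B_0'<:C$; Method Body Well-Typedness plus the Substitution Lemma yield \type{\emptyset}{[\bar{v}/\bar{x}]e_0}{E} with $E<:B_0'<:C$, as required. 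For Progress, every case mirrors FJ; the invocation case $m(\bar{v})$ is no longer stuck because T-Invk guarantees \textit{lookup} is non-empty for the static argument types, Dispatch Stability guarantees \textit{select} succeeds at run-time, and R-Invk therefore applies.

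The main obstacle is precisely the Dispatch Stability lemma: this is where symmetry, multi-dispatch, and the global well-formedness constraints interact, and it is the only place where the proof substantively departs from FJ. In standard FJ the receiver drives dispatch and $mtype$ is essentially monotone along the inheritance chain; here I must instead lean on both Castagna constraints applied to \textit{overloaded}$(m)$, together with the observation that the maximal lower bound $\bar{K}$ of two argument-type sequences, when it exists, consists of classes already named in the branches' argument types so that the required witnessing branch is automatically visible to \textit{lookup}. Once that lemma is in hand, everything else in the soundness argument is a direct transcription of the FJ proof with the uniform-parameter treatment and the new method-typing judgment T-Meth substituted in the obvious places.
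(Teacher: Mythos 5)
Your proposal is correct and follows essentially the same route as the paper: Subject Reduction plus Progress, with Weakening and Substitution carried over from FJ, and a single new dispatch lemma (your ``Dispatch Stability'' is the paper's ``Specificity of Multi-method Dispatch'') justified by the \textsc{T-Prog} well-formedness constraints together with the observation that encapsulation forces the most-specific branch to appear in the \textit{lookup} result. The only notable difference is internal to that lemma: the paper argues by induction on the number of argument positions in which the type sequences differ, whereas you argue directly from Castagna's two constraints, which has the incidental benefit of making the return-type covariance $B_0' <: B_0$ explicit (the paper's lemma asserts it but its proof sketch does not spell it out).
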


\medskip

Having gone through the above exposition, the fact that this property carries over must not come as a surprise. We have kept the semantics of most constructs constant. The exception being method invocation, for which we provide similar guarantees as the original. 

The original proof (cf.~\cite{fjtoplas01} Sect.~2.4, and Appendix $A.1$) has two parts, first establishing \emph{Subject Reduction} and then \emph{Progress}. We can carry through the soundness proof for SFMJ using the original strategy. In fact, most of the proof needs no change. We will follow the original structure of the proof, comment on what can stay untouched, and only write out those bits that need changing.



\begin{theorem}
(Subject Reduction). If \type{\Gamma}{e}{C} and \eval{e}{e'}, then \type{\Gamma}{e'}{C'} for some $C' <: C$.
\end{theorem}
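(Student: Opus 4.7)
The plan is to proceed by induction on the derivation of $e \rightarrow e'$, transplanting the FJ proof wherever the constructs are unchanged and spending the real work on the \textsc{R-Invk} case. For the computation rules \textsc{R-Field} and \textsc{R-Cast}, and for every congruence rule except \textsc{RC-Invk}, the FJ argument carries over verbatim because neither the syntax, typing, nor reduction of those constructs has been altered. \textsc{RC-Invk} merely threads a subject reduction call through the uniform argument list, and its typing via \textsc{T-Invk} is reassembled in the same way as its FJ counterpart was under \textsc{T-Invk-Recv} and \textsc{T-Invk-Arg}.

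Before tackling \textsc{R-Invk} I would re-prove the two supporting lemmas FJ relies on: a term substitution lemma saying that if $\bar{x}:\bar{C} \vdash e_0 : E_0$ and $\emptyset \vdash \bar{v} : \bar{D}$ with $\bar{D} <: \bar{C}$, then $[\bar{v}/\bar{x}]e_0$ admits some type $D_0 <: E_0$; and a method-body lemma extracting from \textsc{T-Meth} that any branch $B_0 \; m(\bar{B} \; \bar{x}) \{\text{return } e_0;\}$ returned by $\textit{lookup}_1$ satisfies $\bar{x}:\bar{B} \vdash e_0 : E_0$ for some $E_0 <: B_0$. Both are essentially structural and go through with only cosmetic adjustments for the symmetric form.

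For \textsc{R-Invk} itself, the derivation of $\Gamma \vdash m(\bar{v}) : C$ via \textsc{T-Invk} supplies argument types $\bar{C}$ with $\Gamma \vdash \bar{v} : \bar{C}$ together with a \emph{statically-picked} branch of signature $C \; m(\bar{B}^{s} \; \bar{x})$ drawn from $\textit{lookup}(m, \bar{C})$, satisfying $\bar{C} <: \bar{B}^{s}$. The reduction rule consults $\textit{select}$ on the same set $\textit{lookup}(m, \bar{C})$ --- values $v_i = \text{new } C_i(\bar{u})$ have runtime type exactly the $C_i$ used at the invocation site --- so the \emph{runtime-selected} branch has formal types $\bar{B}^{r}$ that, by the minimality criterion of $\textit{select}$, satisfy $\bar{B}^{r} <: \bar{B}^{s}$, and some return type $B_0^{r}$. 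Applying the method-body lemma to the runtime-selected branch and then the substitution lemma with $\bar{v}$ for $\bar{x}$ yields $\Gamma \vdash [\bar{v}/\bar{x}]e_0 : D_0$ for some $D_0 <: B_0^{r}$.

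What remains is to close the gap $B_0^{r} <: C$, and this is the step where the new program-level wellformedness hypothesis from \textsc{T-Prog} becomes indispensable. The statically-picked and runtime-selected branches both lie in $\textit{overloaded}$ of the statically-picked branch (the runtime-selected one by virtue of $\bar{B}^{r} <: \bar{B}^{s}$, the statically-picked one trivially), so Castagna's second constraint $\bar{I} <: \bar{J} \Rightarrow I_0 <: J_0$ applied to this pair delivers $B_0^{r} <: C$ directly. I expect this to be the only piece of genuinely new technical content in the proof: it is the sole point at which the covariant-return discipline imposed by \textsc{T-Prog} is actually exercised, and lining up the $\textit{overloaded}$-membership correctly is the main bookkeeping to verify.
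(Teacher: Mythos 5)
Your overall strategy matches the paper's: induction on the reduction derivation, FJ cases untouched, \textsc{RC-Invk} by the induction hypothesis, and all real work in \textsc{R-Invk} via method typing, a substitution lemma, and the \textsc{T-Prog} wellformedness conditions. Where you genuinely diverge is in \emph{where} the multi-dispatch-specific reasoning lives. The paper's \textsc{T-Invk} (as used in its proofs) types an invocation with the branch returned by $\textit{select}(\textit{lookup}(m,\bar{C}))$; since in \textsc{R-Invk} the arguments are values, the statically determined branch and the dynamically selected branch coincide, so the paper's \textsc{R-Invk} case is essentially the FJ one (method typing, Weakening, Substitution, transitivity), and the static-versus-dynamic reconciliation is pushed entirely into the Substitution lemma, whose \textsc{T-Invk} case invokes a separately proved ``Specificity of Multi-method Dispatch'' lemma ($\bar{D}<:\bar{C}$ implies the selected branch has formals $\bar{E}<:\bar{B}$ and return $E_0<:B_0$). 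You instead read \textsc{T-Invk} as mere membership in $\textit{lookup}(m,\bar{C})$, keep the substitution lemma untouched, and reconcile the statically picked and runtime-selected branches inline in \textsc{R-Invk}, using $\textit{select}$'s minimality plus Castagna's second constraint through $\textit{overloaded}$-membership. That inline argument is correct and is in effect a re-derivation of the return-type half of the paper's Lemma 1; your route buys a more self-contained \textsc{R-Invk} case, while the paper's factoring buys a reusable lemma placed exactly where FJ's method-type-preservation lemma sat.

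The weak point is your claim that the substitution lemma goes through ``with only cosmetic adjustments.'' The paper explicitly changes that lemma's \textsc{T-Invk} case, and for good reason: after substituting values of more specific types, the invocation must be re-typed at argument types $\bar{D}<:\bar{C}$, and one must exhibit an applicable branch for $\bar{D}$ whose return type is a subtype of the original. Under the paper's $\textit{select}$-based \textsc{T-Invk} this is precisely Lemma 1 and requires the \textsc{T-Prog} wellformedness; even under your membership reading, you still owe an argument that the originally chosen branch is again returned by $\textit{lookup}(m,\bar{D})$ --- this holds, but only because branches are encapsulated in the class of one of their parameters and $\textit{lookup}_1$ walks superclass chains, so it is a genuine (if small) SFMJ-specific step, not a cosmetic one. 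Spelling that out would close the only real gap in the proposal.
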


\noindent The proof for this theorem relies on three support lemmas\footnote{In the original proof there are four support lemmas; we will not be needing the last one (Lemma A.1.4 in the paper), as it has to do with the typing of the implicit receiving parameter.}.

\medskip

\noindent\begin{lemma}[Specificity of Multi-method Dispatch]
If $\text{select}(\text{lookup}(m,\bar{C})) = B_0 \ m(\bar{B} \ \bar{x})\,\{\text{return }e_0;\}$, then for any $\bar{D} <: \bar{C}$ $\text{select}(\text{lookup}(m,\bar{D})) = E_0 \ m(\bar{E} \ \bar{x})\,\{\text{return }e_0';\}$ with $\bar{E} <: \bar{B}$ and $E_0 <: B_0$.
\end{lemma}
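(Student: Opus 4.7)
The plan is to show two things: first, that the original branch $B_0\,m(\bar B\,\bar x)\{\dots\}$ is still a member of $\textit{lookup}(m,\bar D)$, so whatever branch selection picks for $\bar D$ must beat it in specificity; and second, that the covariance of return types forced by well-formedness converts $\bar E <: \bar B$ into $E_0 <: B_0$.

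First I would argue lookup is monotone under subtyping of the argument vector. Fix any $j$. Because $\textit{lookup}(m,\bar C)$ returns the original branch, $\bar C <: \bar B$ holds, so in particular $C_j <: B_j$. By \textsc{T-Meth}, the branch is defined in the class of one of its parameters, say class $B_j$ (after renaming). The chain $D_j <: C_j <: B_j$ from single inheritance means the walk performed by $\textit{lookup}_1(m,\bar D,D_j)$ passes through $B_j$, finds the same declaration, and admits it because $\bar D <: \bar C <: \bar B$. Hence the original branch lies in $\textit{lookup}(m,\bar D)$.

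Now let $E_0\,m(\bar E\,\bar x)\{\text{return }e_0';\}$ be the branch returned by $\textit{select}$ applied to $\textit{lookup}(m,\bar D)$. By the specification of \textit{select} this branch is most specific in the lookup set, so its parameter types must be subtypes of those of any other applicable branch, including the original; hence $\bar E <: \bar B$. Both branches belong to the same overloaded method (they specialise a common ancestor branch, obtained by iterating the argument), and the program is well-typed, so the second well-formedness constraint of Castagna gives $E_0 <: B_0$.

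The main obstacle is the first step: verifying formally that the original branch survives in $\textit{lookup}(m,\bar D)$. The delicacy is that $\textit{lookup}$ walks separately up each argument-class hierarchy, so one must pin down in which class the branch was found during the original lookup and check that the corresponding traversal for $\bar D$ still reaches that class. This is where symmetric encapsulation pays off, because \textsc{T-Meth} guarantees the branch lives in one of the $B_j$'s, which single inheritance then forces to be an ancestor of the corresponding $D_j$. Once this invariant is made explicit, selection's specificity clause and Castagna's return-type constraint supply the conclusion directly, with no induction on reduction or typing derivations.
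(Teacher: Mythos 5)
Your proof is correct, but it takes a genuinely different route from the paper. The paper argues by induction on the number of argument positions in which $\bar{C}$ and $\bar{D}$ differ: at each step it observes that \textit{lookup} applied to the more specific argument vector returns a superset of the branches returned for the less specific one, then appeals to the well-formedness condition of \textsc{T-Prog} (existence of a single most specific applicable branch) and to encapsulation (that branch lies in the lookup result) to chain $\bar{E'} <: \bar{E} <: \bar{B}$ via the induction hypothesis. You avoid the induction entirely: you show directly that the branch selected for $\bar{C}$ persists in $\textit{lookup}(m,\bar{D})$ --- using that lookup only admits branches with $\bar{C} <: \bar{B}$, that \textsc{T-Meth} places the declaration in one of the classes $B_j$, and that $D_j <: C_j <: B_j$ puts that class on the superclass chain walked from $D_j$ --- so minimality of \textit{select} immediately gives $\bar{E} <: \bar{B}$. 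This is cleaner, and it is also more complete on one point: you derive $E_0 <: B_0$ explicitly from Castagna's second (return-type covariance) constraint via the \textit{overloaded} relation, whereas the paper's proof stops at the argument types and leaves the return-type claim implicit; note that your witness for the common ancestor branch can simply be the original branch itself, since $\bar{E} <: \bar{B}$ and $\bar{B} <: \bar{B}$, so no iteration is needed. The one thing you take for granted, as does the paper's statement of the lemma, is that $\textit{select}(\textit{lookup}(m,\bar{D}))$ is defined at all; your step showing the original branch survives in $\textit{lookup}(m,\bar{D})$ gives non-emptiness, but a fully self-contained argument would add (as the paper's proof does, citing \textsc{T-Prog}) that Castagna's first constraint together with encapsulation guarantees a unique most specific branch in that set, so that \textit{select} yields a single method.
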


\noindent This lemma establishes that method lookup return a more specific method when given more specific argument types.

\begin{proof}
	
\noindent We proceed by induction on the \emph{number of positions} in which the input types ($\bar{C},\bar{D}$) differ.

subtype relation on input types: for the base case, we prove that the lemma holds for identical input types, while for the induction step, we prove the lemma holds for an input type that is specific in exactly one position.  then we `step up' from one input type, to an input type that is more specific in exactly one position.

\begin{itemize}
	\item The base case: the input types differ in $0$ positions, so $\bar{D} = \bar{C}$. \emph{select} returns the same method, so $\bar{B} = \bar{E}$.
	
	\item The induction step considers the case whereby the input types are in $k$ places more specific, assuming the lemma holds branches that are in $k-1$ places more specific. In particular, consider an input type $\bar{D'}$ that is in one place more specific than another input type $\bar{D}$: $\bar{D'} = D_0,\dots,D'_i,\dots,D_n$, with $D'_i <: D_i$.

	It is trivial to see that when invoked with $\bar{D'}$, the \emph{lookup} predicate returns a superset of the methods that it returns for $\bar{D}$: it returns (applicable) branches declared in $D'_i$ (and supertypes that are smaller than $D_i$), \emph{in addition} to the ones declared in $\bar{D}$, and supertypes.

	From this set, by definition, \emph{selects} retains the minimal elements. Given the well-formedness condition induced by \textsc{T-Prog}, we know that for all possible input types there is single most specific method branch, let us call it $E'_0\text{ }m\texttt{(}\bar{E'}\text{ }\bar{x}\texttt{)\{return }e_0''\texttt{;\}}$. We also know that because method branches are encapsulated, this most specific method branch is in the set returned by \emph{lookup}.

Hence we know that $\bar{E'} <: \bar{E}$, and because of induction hypothesis, $\bar{E} <: \bar{B}$.
	
\end{itemize}
	
\end{proof}

\medskip

\noindent\begin{lemma}[Term Substitution Preserves Typing] If \type{\Gamma,\bar{x}:\bar{B}}{e}{D}, and \type{\Gamma}{\bar{d}}{\bar{A}} where $\bar{A} <: \bar{B}$, then \type{\Gamma}{[\bar{d}/\bar{x}]e}{C} for some $C <: D$.
\end{lemma}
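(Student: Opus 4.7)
The plan is to prove the lemma by structural induction on the typing derivation \type{\Gamma,\bar{x}:\bar{B}}{e}{D}, equivalently on the structure of $e$. For every form of $e$ other than method invocation, the case is structurally identical to the corresponding case of the FJ substitution lemma (cf.~\cite{fjtoplas01} Lemma A.1.2) and requires no adaptation: each relies only on transitivity of subtyping, on the auxiliary field-lookup and constructor-formation lemmas inherited unchanged from FJ, and on a componentwise application of the inductive hypothesis to subterms. For a variable $x$, either $x = x_i$ for some $i$, in which case the conclusion follows from $A_i <: B_i$, or $x$ is fixed by $\Gamma$ and the conclusion follows by reflexivity of subtyping. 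The stupid-cast case may yield a derivation that is itself a stupid cast, but the resulting type still coincides with $D$ and the subtype conclusion holds trivially.

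The only case requiring new reasoning is method invocation, since T-Invk is the rule in which SFMJ departs from FJ. Suppose $e = m(\bar{e}_0)$ with a derivation built by T-Invk from the premises \type{\Gamma,\bar{x}:\bar{B}}{\bar{e}_0}{\bar{T}} and $B_0 \ m(\bar{B}_0 \ \bar{y})\{\text{return }e_m;\} \in \textit{lookup}(m,\bar{T})$ with $\bar{T} <: \bar{B}_0$, so that $D = B_0$. Componentwise application of the inductive hypothesis gives \type{\Gamma}{[\bar{d}/\bar{x}]\bar{e}_0}{\bar{T}'} for some $\bar{T}' <: \bar{T}$, whence $\bar{T}' <: \bar{B}_0$ by transitivity. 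What remains is to exhibit a branch applicable to the more specific input $\bar{T}'$ whose return type is a subtype of $B_0$, and this is exactly what the Specificity of Multi-method Dispatch lemma delivers. Applying it at $\bar{T}'$ yields a branch $E_0 \ m(\bar{E} \ \bar{y})\{\text{return }e_m';\}$ selected for $\bar{T}'$ with $E_0 <: B_0$, and by definition of \textit{select} this branch is also a member of $\textit{lookup}(m,\bar{T}')$ with $\bar{T}' <: \bar{E}$. Rule T-Invk then types \type{\Gamma}{m([\bar{d}/\bar{x}]\bar{e}_0)}{E_0} with $E_0 <: D$, discharging the case.

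I expect the main obstacle to be this method-invocation case, and specifically the step that bridges ``some branch lies in the static lookup set for $\bar{T}$'' to ``a branch with a return type below $B_0$ lies in the lookup set for $\bar{T}'$''. Without the well-formedness condition imposed by T-Prog (covariant return types along the overloading hierarchy, together with the maximal-lower-bound condition on branch signatures) the static lookup set could in principle refine to unrelated branches as the input types become more specific. It is precisely the Specificity lemma, which itself depends on well-formedness, that lets us chain ``more specific input implies more specific output'' and close the induction. The remaining cases are bureaucratic carry-overs from FJ and should not require any adjustment beyond renaming.
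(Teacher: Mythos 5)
Your proposal is correct and takes essentially the same route as the paper: induction on the typing derivation, every case except method invocation inherited unchanged from the FJ substitution lemma, and the \textsc{T-Invk} case closed by the induction hypothesis, the Specificity of Multi-method Dispatch lemma applied at the more specific (substituted) argument types, and one further application of \textsc{T-Invk}. Indeed, you apply the specificity lemma at $\bar{T}'$ (the types of the substituted arguments), which is the intended reading of the paper's own step where its text misprints the original types $\bar{C}$ in place of the refined ones.
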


\noindent The lemma keeps its original form and proof method. The only case that changes is \textsc{T-Invk}, and only because of its reliance on Lemma 1 and the symmetric form of methods; the original reasoning for proving the case applies entirely. The proof  works by induction on the derivation of \type{\Gamma,\bar{x}:\bar{B}}{e}{D}.

\begin{proof} 
	
\begin{itemize}
	
\noindent We state only the case for \textsc{T-Invk}:
	
\item \emph{Case \textsc{T-Invk}:}
When the last step in the derivation is \textsc{T-Invk}, we have:
\infrule[T-Invk]
 {\type{\Gamma,\bar{x}:\bar{B}}{\bar{e}}{\bar{C}} \\
  \textit{select}(\textit{lookup}(m,\bar{C})) = B_0 \ m(\bar{B} \ \bar{x})\,\{\text{return }e_0;\} \\
  \bar{C} <: \bar{B}
 }
 {\type{\Gamma,\bar{x}:\bar{A}}{m(\bar{e})}{B_0}}

\noindent Because of induction hypothesis, we have 

\infax{\type{\Gamma}{[\bar{d}/\bar{x}]\bar{e}}{\bar{D}} \andalso \bar{D} <: \bar{C} }

\noindent From Lemma 1, we know that 

\infax{\text{select}(\text{lookup}(m,\bar{C})) = E_0 \ m(\bar{E} \ \bar{x})\,\{\text{return }e_0';\} }

\noindent with $\bar{E} <: \bar{B}$ and $E_0 <: B_0$. Using \textsc{T-Invk} on this, we may derive

\infax{\type{\Gamma}{m([\bar{d}/\bar{x}]\bar{e})}{E_0}}

\noindent and hence

\infax{\type{\Gamma}{[\bar{d}/\bar{x}]m(\bar{e})}{E_0}}

\end{itemize}

\end{proof}

\begin{lemma}[Weakening]. If \type{\Gamma}{e}{C}, then \type{\Gamma,x:D}{e}{C}. 
\end{lemma}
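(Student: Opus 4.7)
The plan is to proceed by straightforward structural induction on the derivation of $\Gamma \vdash e : C$, taking cases on the last typing rule applied. Under the standard convention that the new variable $x$ is fresh with respect to $\Gamma$ (and may be $\alpha$-renamed if not), extending the context with $x:D$ cannot invalidate any existing lookup in $\Gamma$ or any side condition that does not mention $x$.

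For the variable rule \textsc{T-Var}, if $\Gamma(y) = C$ then also $(\Gamma,x:D)(y) = C$ since $x$ is fresh, so the rule reapplies directly. For \textsc{T-Field}, \textsc{T-New}, and the cast rules (\textsc{T-UCast}, \textsc{T-DCast}, \textsc{T-SCast}), the premises either concern subexpressions typed in $\Gamma$, to which the induction hypothesis applies to give the same types in $\Gamma,x:D$, or are conditions on class names and subtyping that do not refer to the context at all; in both cases, reapplying the rule yields the desired judgment with exactly the same type $C$. For \textsc{T-Invk}, the premises are $\Gamma \vdash \bar{e} : \bar{C}$ together with a \textit{lookup}/\textit{select} side condition and a subtyping condition $\bar{C} <: \bar{B}$; the latter two are independent of the context, and the induction hypothesis applied pointwise to the argument expressions yields $\Gamma,x:D \vdash \bar{e} : \bar{C}$, so \textsc{T-Invk} reapplies and again gives the same type $C$.

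I expect no real obstacle here: unlike the Term Substitution lemma, weakening preserves the type exactly (no subtyping slack is introduced), and none of the SFMJ-specific changes (the symmetric form of invocation, the \textit{lookup}/\textit{select} machinery, the program-level well-formedness condition) interact with the typing context at all. The only minor bookkeeping is the freshness convention for $x$, which is handled in the usual way by implicit $\alpha$-conversion of bound variables in $e$. Consequently the proof is a routine copy of the FJ weakening argument, with the single adjustment that the \textsc{T-Invk} case no longer distinguishes a receiver from the remaining arguments and simply invokes the induction hypothesis uniformly on all subexpressions $\bar{e}$.
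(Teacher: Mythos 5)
Your proof is correct and matches what the paper does: the paper simply asserts that the original FJ weakening proof carries over, and your write-up is exactly that routine structural induction on the typing derivation, with the only adaptation being the uniform treatment of arguments in \textsc{T-Invk} (whose \textit{lookup}/\textit{select} and subtyping premises are context-independent). No further comment needed.
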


\begin{proof}
Original proof holds by relying on the modified lemmas. 
\end{proof}

\medskip

We can now give the proof for the Subject Reduction theorem. The proof proceeds by induction on the reduction derivation, with case analysis on the reduction rule used. The cases for \textsc{R-Field}, \textsc{R-Cast}, \textsc{RC-Field}, \textsc{RC-New-Arg}, and \textsc{RC-Cast} remain unchanged. The case for \textsc{RC-Invk} is a simple appeal to the induction hypothesis. This leaves only the case for \textsc{R-Invk}.

\begin{proof} 
	
\begin{itemize} \ \\

\item	\emph{Case \textsc{R-Invk}}: 
We assume that 
\begin{center}
	\begin{tabular}{ll}
		\eval{m(\overline{\text{new }C(\bar{u})})}{[\overline{\text{new }C(\bar{u})}/\bar{x}]e_0} & \\
		\type{\Gamma}{\overline{\text{new }C(\bar{u})}}{B_0} & (1)
	\end{tabular}
\end{center}
\noindent and set out to prove that 
$$\type{\Gamma}{[\overline{\text{new }C(\bar{u})}/\bar{x}]e_0}{A_0} \ \text{  for some }A_0 <: B_0$$

\noindent Assumption $(1)$ can only have been derived by application of \textsc{T-Invk}, and \textsc{T-New} on the first premise of that rule. So, we instantly get:
\begin{center}
	\begin{tabular}{cl}
		\type{\Gamma}{\overline{\text{new }C(\bar{u})}}{\bar{C}} & \\
		select(lookup($m,\bar{C}$)) = $B_0 \ m(\bar{B} \ \bar{x})\,\{\text{return }e_0;\}$ & \\
		$\bar{C} <: \bar{B}$ \quad & $(2)$
	\end{tabular}
\end{center}

\noindent Considering that we assume that methods are well-typed, we know from \textsc{T-Method} that $$\type{\bar{x}:\bar{B}}{e_0}{E_0} \text{with }E_0 <: B_0$$
\noindent Because of the Weakening lemma, we may derive
$$\type{\Gamma,\bar{x}:\bar{B}}{e_0}{E_0}$$
\noindent Using the Substitution lemma and $(2)$ , we obtain 
$$\type{\Gamma}{[\text{new }C(\bar{u})/\bar{x}]e_0}{A_0} \text{   with }A_0 <: E_0$$ 
\noindent Transitivity of $<:$ gives us the desired result. 

\end{itemize}

\end{proof}

\begin{theorem}
(Progress). Suppose $e$ is a well-typed expression.
\begin{enumerate}
 \item If $e$ includes $\text{new }C(\bar{e}).f$ as a subexpression, then fields($C$)=$\bar{C} \ \bar{f}$ and $f \in \bar{f}$ for some $\bar{C}$ and $\bar{f}$.
 \item If $e$ includes $m(\overline{\text{new }C(\bar{e})})$ as a subexpression, then select(lookup($m,\bar{C}$)) = $B_0 \ m(\bar{B} \ \bar{x})\,\{\text{return }e_0;\}$ , and $\#(\bar{x}) = \#(\bar{e})$ for some method $B_0 \ m(\bar{B} \ \bar{x})\,\{\text{return }e_0;\}$. 
\end{enumerate}
\end{theorem}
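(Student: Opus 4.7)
The plan is to proceed by case analysis on the shape of the subexpression in question, in each case relying on the fact that any subexpression of a well-typed expression is itself well-typed (in some extended context) and is therefore derivable by one of the expression typing rules.

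For part (1), I would observe that the subexpression $\text{new }C(\bar{e}).f$ can only have been typed by the field-access rule (the FJ rule \textsc{T-Field}, unchanged in SFMJ). That rule requires typing the receiver first; by \textsc{T-New}, the receiver $\text{new }C(\bar{e})$ has static type exactly $C$, and \textsc{T-Field} then forces $f$ to occur in fields($C$). This is precisely the conclusion, so the case is immediate.

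For part (2), the subexpression $m(\overline{\text{new }C(\bar{e})})$ must have been typed by \textsc{T-Invk}. By \textsc{T-New}, each value $\text{new }C_i(\bar{e}_i)$ has static type $C_i$, so the premises of \textsc{T-Invk} yield in particular that lookup($m,\bar{C}$) is non-empty: it contains some branch $B_0\ m(\bar{B}\ \bar{x})\{\text{return }e_0;\}$ with $\bar{C} <: \bar{B}$, whose arity matches $\bar{e}$ (otherwise the subtyping $\bar{C} <: \bar{B}$ would be ill-formed). I would then appeal to the global premise of \textsc{T-Prog}: the overloaded set to which this branch belongs is \emph{wellformed} in the sense of Castagna, so on any non-empty set of applicable branches there is a unique most specific element. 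The discussion following the definition of \emph{overloaded} already explains why this most specific element must itself be returned by lookup (because its parameter types are mentioned among those of the applicable branches, and methods are encapsulated in their parameter classes); hence select returns it, giving the required $B_0\ m(\bar{B}\ \bar{x})\{\text{return }e_0;\}$ with $\#(\bar{x}) = \#(\bar{e})$.

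The main obstacle is not the routine inversion of typing rules but the justification that select is in fact well-defined on the applicable set — i.e., that the well-formedness condition plus encapsulation together guarantee a unique most specific branch actually lies in lookup($m,\bar{C}$). This is exactly the content of the informal argument given after the definition of \emph{wellformed}, and is also the key ingredient already used in Lemma~3.1.1; I would cite that same reasoning rather than reprove it. Everything else (that the arity of the selected branch matches $\bar{e}$, that $\bar{C}$ are the run-time types obtained from \textsc{T-New}) is a direct read-off from the typing premises.
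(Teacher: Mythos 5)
Your argument is correct and follows essentially the same route as the paper's own (much terser) proof: arity matching comes from the applicability condition $\bar{C} <: \bar{B}$ built into \emph{lookup}/\textsc{T-Invk}, and the uniqueness of the selected branch comes from the \textsc{T-Prog} well-formedness constraints together with the encapsulation argument already used for Lemma 3.1.1. Your version merely spells out the inversion of \textsc{T-Invk}/\textsc{T-New} and the non-emptiness of the applicable set, which the paper leaves implicit.
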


\noindent Only second case is altered.

\begin{proof}
The definition of \emph{lookup} guarantees that $\#(\bar{x}) = \#(\bar{e})$ for every method it returns. The constraints on branch definition guarantee that \emph{select} will return a single method.
\end{proof}

\medskip

The proof of type soundness follows immediately from the Subject Reduction and Progress theorems.

\section{Related Work}
\label{sec:related}

Generic methods in languages like CLOS~\cite{clos} provide similar abstraction over application structure as we propose here, and their software engineering benefits are well understood in the community. However generic methods are globally visible and not encapsulated, and therefore introduce state visibility issues (for a discussion see e.g.~\cite{binarymethods95}). 

The combination of encapsulation and the flexibility of changing application structure has been the aim of some research in the Aspect-Oriented community. In particular, there has been work on combining different class hierarchies without changing the classes being merged~\cite{sop,subjectod}. However, the language extensions and tools proposed by such research goes significantly beyond what we propose here, both in terms of flexibility and complexity.

The Fortress programming language~\cite{fortressmod,fortress}, like SFMJ, uses symmetric encapsulated multi-methods. However, where we incorporate them in a class-only context, they propose a dedicated model based on components, traits and objects. Also, we provide a formal semantics and proof of soundness for our scheme.

As noted before, symmetric methods may be employed with single dispatch. However, using multiple dispatch enriches the facility greatly. Multiple dispatch in the context of class-based programming languages has been extensively studied, both in terms of theoretical treatments~\cite{conflict}, and as practical extensions to mainstream languages to a.o.~C++~\cite{cpp}, Smalltalk~\cite{smalltalk}, and even ML (with added object-oriented features)~\cite{mlsub}. Two extensions to Java are of particular note.

Boyland et al. propose \emph{parasitic methods}~\cite{parasitic97}. Parasitic methods may dynamically overload other methods. Definition of parasitic methods is constrained to preserve modularity and separate compilation. The extension supports both covariant and contravariant specialization of methods. Some of the overloading strategies it supports, however, seem somewhat complex for programmers to confidently predict branch selection. 

MultiJava~\cite{multijavatoplas06} is an extension of Java to support symmetric multiple dispatch. Its emphasis lies on modularity and backward-compatibility. MultiJava also offers \emph{open classes}: classes to which methods can be added without changing the original class definition. MultiJava does not support symmetric methods or the definition of methods outside the class of the receiving argument. Open classes can only be extended by adding methods to the class; methods may not be moved to another class like our design allows.

There has been significant research focusing on the modularity issues associated with multi-methods. Millstein and Chambers~\cite{dubious} propose a number of different (language independent) constraints that yield different trade-offs between flexibility and modularity. Based on these ideas, they propose Relaxed MultiJava~\cite{relaxedmj03}. Finally, techniques from predicate dispatch -a generalization of type-based dispatch- have been brought to bear on the problem by Frost et al.~\cite{modularjpred}.

\section{Conclusions and Further Work}
\label{sec:conclusions}

In this article, we have argued that interfaces in mainstream object-oriented languages, by leaking \emph{where} methods are implemented, unduly restrict some useful forms of program evolution. We have proposed a simple scheme that mitigates this problem to some degree. The scheme consists of making it possible to define methods in the classes of any of its parameters. This means that a method may be moved to another of its parameter classes (but not any other classes) without breaking client code. To maintain object-oriented extensibility this implies the use of dynamic dispatch on all parameters. In the presence of multiple dispatch, a subclass of one of the parameter classes may define a method that overloads a method defined not in its superclass, but in another parameter class, and, for appropriate arguments, it would be this method that would be invoked.

To illustrate the idea as clearly as possible, and to address any doubts as to whether the above scheme is sound, we have defined our proposal as an extension of Featherweight Java. It turns out that the extension can be kept very simple, and that we may limit the changes to FJ to method definition, invocation, and lookup. Also, soundness may be established in a similar fashion to FJ and does not pose significant challenges.

Defining the scheme in terms of a core calculus has the disadvantage that only very basic object-oriented features are modelled. To come to a more interesting language, as further research, we would like to add more features. Some features may be added orthogonally. For example, there seems no indication that addition of imperative features is complicated by our scheme. Other features would be very easy to add, such as a null value. However, some features may pose a more significant challenge. For example adding support for abstract classes and interfaces may be difficult because of the presence of multiple dispatch. Abstract classes would introduce the \emph{coverage} problem for overloaded methods. In our scheme, we were always guaranteed to have an implementation of the most general branch of an overloaded method. With a naive integration with abstract classes this would not necessarily be always be the case. Interfaces would additionally incur the problem of multiple inheritance: spurious branches may have to be defined to avoid ambiguity problems. However, solutions to these problems have been proposed in different contexts, and we would like to investigate how these could be applied to our setting. In similar vein, while our scheme is modular, it does not allow for separate compilation of classes, as we need global knowledge of method definitions to make sure we do not miss conflicting method branch definitions. We would like to investigate how the research on modular multiple dispatch may be applied to lift this requirement.


\begin{thebibliography}{10}

\bibitem{parnasmodules}
Parnas, D.L.:
\newblock On the criteria to be used in decomposing systems into modules.
\newblock Commun. ACM \textbf{15}(12) (1972)  1053--1058

\bibitem{fjtoplas01}
Igarashi, A., Pierce, B., Wadler, P.:
\newblock {Featherweight Java}: {A} minimal core calculus for {Java} and {GJ}.
\newblock TOPLAS \textbf{23}(3) (May 2001)  396--459

\bibitem{castagna95calculus}
Castagna, G., Ghelli, G., Longo, G.:
\newblock A calculus for overloaded functions with subtyping.
\newblock Information and Computation \textbf{117}(1) (15~February 1995)
  115--135

\bibitem{clos}
Bobrow, D.G., DeMichiel, L.G., Gabriel, R.P., Keene, S.E., Kiczales, G., Moon,
  D.A.:
\newblock Common lisp object system specification.
\newblock SIGPLAN Not. \textbf{23}(SI) (1988)  1--142

\bibitem{binarymethods95}
Bruce, K., Cardelli, L., Castagna, G., Leavens, G.T., Pierce, B.:
\newblock On binary methods.
\newblock Theor. Pract. Object Syst. \textbf{1}(3) (1995)  221--242

\bibitem{sop}
Harrison, W., Ossher, H.:
\newblock Subject-oriented programming: a critique of pure objects.
\newblock SIGPLAN Not. \textbf{28}(10) (1993)  411--428

\bibitem{subjectod}
Clarke, S., Harrison, W., Ossher, H., Tarr, P.:
\newblock Subject-oriented design: towards improved alignment of requirements,
  design, and code.
\newblock In: OOPSLA '99: Proceedings of the 14th ACM SIGPLAN conference on
  Object-oriented programming, systems, languages, and applications, New York,
  NY, USA, ACM (1999)  325--339

\bibitem{fortressmod}
Allen, E., Hallett, J.J., Luchangco, V., Ryu, S., Guy L.~Steele, J.:
\newblock Modular multiple dispatch with multiple inheritance.
\newblock In: SAC '07: Proceedings of the 2007 ACM symposium on Applied
  computing, New York, NY, USA, ACM (2007)  1117--1121

\bibitem{fortress}
Allen, Chase, Hallett, Luchangco, Maessen, Ryu, Steele, Tobin-Hochstadt:
\newblock The Fortress Language Specification. Version 1.0.
\newblock Sun Microsystems (2008)

\bibitem{conflict}
Castagna, G.:
\newblock Covariance and contravariance: conflict without a cause.
\newblock ACM Trans. Program. Lang. Syst. \textbf{17}(3) (1995)  431--447

\bibitem{cpp}
Pirkelbauer, P., Solodkyy, Y., Stroustrup, B.:
\newblock Open multi-methods for c++.
\newblock In: GPCE '07: Proceedings of the 6th international conference on
  Generative programming and component engineering, New York, NY, USA, ACM
  (2007)  123--134

\bibitem{smalltalk}
Foote, B., Johnson, R.E., Noble, J.:
\newblock Efficient multimethods in a single dispatch language.
\newblock In: ECOOP. (2005)  337--361

\bibitem{mlsub}
Bourdoncle, F., Merz, S.:
\newblock Type checking higher-order polymorphic multi-methods.
\newblock In: POPL '97: Proceedings of the 24th ACM SIGPLAN-SIGACT symposium on
  Principles of programming languages, New York, NY, USA, ACM (1997)  302--315

\bibitem{parasitic97}
Boyland, J., Castagna, G.:
\newblock Parasitic methods: an implementation of multi-methods for java.
\newblock In: OOPSLA '97: Proceedings of the 12th ACM SIGPLAN conference on
  Object-oriented programming, systems, languages, and applications, New York,
  NY, USA, ACM (1997)  66--76

\bibitem{multijavatoplas06}
Clifton, C., Millstein, T., Leavens, G.T., Chambers, C.:
\newblock Multijava: Design rationale, compiler implementation, and
  applications.
\newblock ACM Trans. Program. Lang. Syst. \textbf{28}(3) (2006)  517--575

\bibitem{dubious}
Millstein, T.D., Chambers, C.:
\newblock Modular statically typed multimethods.
\newblock In: ECOOP '99: Proceedings of the 13th European Conference on
  Object-Oriented Programming, London, UK, Springer-Verlag (1999)  279--303

\bibitem{relaxedmj03}
Millstein, T., Reay, M., Chambers, C.:
\newblock Relaxed multijava: balancing extensibility and modular typechecking.
\newblock In: OOPSLA '03: Proceedings of the 18th annual ACM SIGPLAN conference
  on Object-oriented programing, systems, languages, and applications, New
  York, NY, USA, ACM (2003)  224--240

\bibitem{modularjpred}
Frost, C., Millstein, T.:
\newblock Modularly typesafe interface dispatch in jpred.
\newblock In: The 2006 International Workshop on Foundations and Developments
  of Object-Oriented Languages (FOOL/WOOD '06). (January 2006)

\end{thebibliography}

\begin{figure}
\fbox{
\begin{minipage}{\linewidth}

\noindent{\textbf{Computation Rules: $\boxed{\eval{e}{e'}}$}}

\infrule[R-Field]
 {\textit{fields}(C) = \bar{C} \, \bar{f}}
 {\eval{(\text{new }C(\bar{v})).f_i}{v_i}}

\infrule[R-Invk]
 {v_i = \text{new }C(\bar{u}) \\
  \textit{select}(\textit{lookup}(m,C)) = B_0 \ m(\bar{B} \ \bar{x})\,\{\text{return }e_0;\}
 } 
 {\eval{m(\bar{v})}{[\bar{v}/\bar{x}]e_0}}

\infrule[R-Cast]
 {C <: D}
 {\eval{(D)(\text{new }C(\bar{v}))}{\text{new }C(\bar{v})}}

\noindent{\textbf{Congruence Rules: $\boxed{\eval{e}{e'}}$}}

\infrule[RC-Field]
 {\eval{e_0}{e_0'}}
 {\eval{e_0.f}{e_0'.f}}

\infrule[RC-Invk]
 {\eval{e_i}{e_i'}}
 {\eval{m(\bar{v},e_i,\bar{e})}{m(\bar{v},e_i',\bar{e})}}

\infrule[RC-New-Arg]
 {\eval{e_i}{e_i'}}
 {\eval{\text{new }C(\bar{v},e_i,\bar{e})}{\text{new }C(\bar{v},e_i',\bar{e})}}

\infrule[RC-Cast]
 {\eval{e_0}{e_0'}}
 {\eval{(C)e_0}{(C)e_0'}}  

\end{minipage} 
}
\caption{SFMJ: Reduction rules}
\label{fig:reduction}
\end{figure}

\begin{figure}
\fbox{
\begin{minipage}{0.98\linewidth}

\noindent{\textbf{Method lookup: $\boxed{\textit{lookup}(m,\bar{C}) = \overline{ B_0 \ m(\bar{B} \bar{x} )  \,\{e_0;\} }}$ }}

\infax[]
  {\textit{lookup}(m,\bar{C}) = \textit{lookup}_1(m,\bar{C},C_1), \dots, \textit{lookup}_1(m,\bar{C},C_n)}

\infax[]
  {\textit{lookup}_1(m,\bar{C},\text{Object}) = \bullet}

\infrule[]
  {CT(C) = \text{class }C\text{ extends }D\,\{\dots\} \\
   \bar{C} <: \bar{B} \andalso B_0 \ m(\bar{B} \bar{x})\,\{\text{return }e_0;\}  \in \bar{M}
  }
  {\textit{lookup}_1(m,\bar{C}, C) = B_0 \ m(\bar{B} \bar{x})\,\{\text{return }e_0;\}, \textit{lookup}_1(m,\bar{C},D)}

\infrule[]
  {CT(C) = \text{class }C\text{ extends }D\,\{\dots\} \\
   \bar{C} <: \bar{B} \andalso B_0 \ m(\bar{B} \bar{x})\,\{\text{return }e_0;\}  \not\in \bar{M}
  }
  {\textit{lookup}_1(m,\bar{C},C) = \textit{lookup}_1(m,\bar{C},D)}
\end{minipage}
}
\caption{Lookup in SFMJ}
\label{fig:lookupsfmj}
\end{figure}

\begin{figure}
\fbox{
\begin{minipage}{0.98\linewidth}
%
%
%
%

\noindent{\textbf{Expression typing: $\boxed{\type{\Gamma}{t}{C}}$}}

\infrule[T-Var]
 {x:C \in \Gamma}
 {\type{\Gamma}{x}{C}}

\infrule[T-Field]
 {\type{\Gamma}{e_0}{C_0} \andalso \textit{fields}(C_0) = \bar{C} \ \bar{f}}
 {\type{\Gamma}{e_0.f_i}{C_i}}

\infrule[T-Invk]
 {\type{\Gamma}{\bar{e}}{\bar{C}} \\
  B_0 \ m(\bar{B} \ \bar{x})\,\{\text{return }e;\} = \textit{select}(\textit{lookup}(m,\bar{C})) \\
  \bar{C} <: \bar{B_0}
 }
 {\type{\Gamma}{m(\bar{e})}{C}}

\infrule[T-New]
 {\textit{fields}(C) = \bar{D} \ \bar{f} \andalso
  \type{\Gamma}{\bar{e}}{\bar{C}} \andalso C <: D
 }
 {\type{\Gamma}{\text{new }C(\bar{t})}{C}}

\infrule[T-UCast]
 {\type{\Gamma}{e_0}{D} \andalso D <: C}
 {\type{\Gamma}{(C)e_0}{C}}

\infrule[T-DCast]
 {\type{\Gamma}{e_0}{D} \andalso C <: D \andalso C \neq D}
 {\type{\Gamma}{(C)e_0}{C}}

\infrule[T-SCast]
 {\type{\Gamma}{e_0}{D} \andalso C \not<: D \andalso D \not<: C \andalso
  \textit{stupid warning}
 }
 {\type{\Gamma}{(C)e_0}{C}}
 
\medskip

\noindent{\textbf{Method typing: $\boxed{M \  OK\text{ in }C}$}}

\medskip

\infrule[T-Meth]
 {\type{\bar{x}:\bar{C}}{e_0}{E_0} \andalso E_0 <: C_0 \\ 
  \exists i \in [1,n]. C \equiv C_i \\ 
 }
 {C_0 \  m(\bar{C} \ \bar{x}) \{\text{return }e_0;\} \ OK\text{ in }C}

\medskip

\noindent{\textbf{Class typing: $\boxed{C \ OK}$}}

\medskip

\infrule[]
 {\textit{fields}(D) = \bar{D} \ \bar{g} \\
  K = C(\bar{D}\,\bar{g}, \bar{C}\,\bar{f})\,\{\text{super}(\bar{g}); \text{this}.\bar{f}=\bar{f};\} \\
  \bar{M} \, OK\text{ in }C
 }
 {\text{class }C\text{ extends }D\,\{\bar{C} \bar{f}; K \bar{M}\}\, OK}

\medskip

\noindent{\textbf{Program typing: $\boxed{(CT,e) \ OK}$}}

\medskip

\infrule[T-Prog]
 {\type{\emptyset}{e}{C} \\
  \forall m \in \mathcal{M}_{CT} . \textit{wellformed}(\textit{overloaded}(m))
 }
 {(CT,e) \ OK}

\end{minipage}
} 
\caption{SFMJ: Typing rules}
\label{fig:typing}
\end{figure}

\begin{figure}
\fbox{
\begin{minipage}{0.98\linewidth}

\noindent\textbf{Reduction Rules}

\infrule[RC-Invk-Recv]
 {\eval{e_0}{e_0'}}
 {\eval{e_0.m(\bar{e})}{e_0'.m(\bar{e})}}

\infrule[RC-Invk-Arg]
 {\eval{e_i}{e_i'}}
 {\eval{e_0.m(\bar{v},e_i,\bar{e})}{e_0.m(\bar{v},e_i',\bar{e})}}

\infrule[R-Invk]
 {v_i = \text{new }C_i(\bar{u}_i) \\
  \textit{select}(\textit{lookup}(m,C_0)) = B \ m(\bar{B} \ \bar{x})\,\{\text{return }e_0;\}
 }
 {\eval{(\text{new }C(\bar{u})).m(\bar{v})}
       {[ \text{new }C(\bar{u}) / \texttt{this},\bar{v}/\bar{x} ]e_0}}

\medskip

\noindent\textbf{Expression Typing}

\medskip

\infrule[T-Invk]
 {\type{\Gamma}{\bar{e_0}}{\bar{C_0}} \\
  B_0 \ m(\bar{B} \ \bar{x})\,\{\text{return }e;\} = \textit{select}(\textit{lookup}(m,C_0)) \\
  \type{\Gamma}{\bar{e}}{\bar{C}} \\
  \bar{C} <: \bar{B_0}
 }
 {\type{\Gamma}{e_0.m(\bar{e})}{C}}

\noindent\textbf{Method Typing}

\infrule[T-Meth]
 {\type{\texttt{this}:C, \bar{x}:\bar{C}}{e_0}{E_0} \andalso E_0 <: C_0 \\ 
  C \equiv C_i \text{ for \emph{an} }i \in [1,n] \\
  CT(C) = \text{class }C\text{ extends }D\,\{\dots\} \\ 
  \text{if }\text{select}(\text{lookup}(m,D) = D_0 \  m(\bar{D} \ \bar{x}) \{\text{return }e_0';\} \\
     \text{ then } \bar{C} = \bar{D} \text{ and } C_0 = D_0 
 }
 {C_0 \  m(\bar{C} \ \bar{x}) \{\text{return }e_0;\} \ OK\text{ in }C}

\end{minipage}
}
\caption{Aspects of FJ that Have Changed}
\label{fig:fj}
\end{figure}

\begin{figure}
\fbox{
\begin{minipage}{0.98\linewidth}

\noindent{\textbf{Method lookup: $\boxed{\textit{lookup}(m,\bar{C}) = \overline{ B_0 \ m(\bar{B} \bar{x} )  \,\{e_0;\} }}$ }}

\infax[]
  {\textit{lookup}(m,\bar{C}) = \textit{lookup}_1(m,C_1)}

\infax[]
  {\textit{lookup}_1(m,\text{Object}) = \bullet}

\infrule[]
  {CT(C) = \text{class }C\text{ extends }D\,\{\dots\} \\
   \bar{C} <: \bar{B} \andalso B_0 \ m(\bar{B} \bar{x})\,\{\text{return }e_0;\}  \in \bar{M}
  }
  {\textit{lookup}_1(m,C) = B_0 \ m(\bar{B} \bar{x})\,\{\text{return }e_0;\}}

\infrule[]
  {CT(C) = \text{class }C\text{ extends }D\,\{\dots\} \\
   B_0 \ m(\bar{B} \bar{x})\,\{\text{return }e_0;\}  \not\in \bar{M}
  }
  {\textit{lookup}_1(m,C) = \textit{lookup}_1(m,D)}
  
\medskip

\noindent{\textbf{Method selection: $\boxed{\textit{select}(\bar{m})=m_i}$}}

\smallskip

\infax[]
  {\textit{select}(\bar{m})=m_1}
  
\end{minipage}
}
\caption{Lookup in FJ}
\label{fig:fjlookup}
\end{figure}

\end{document}